\newcommand{\bfm}[1]{\mbox{\boldmath{$#1$}}}
\newcommand{\beq}{\begin{eqnarray}}
\newcommand{\eeq}{\end{eqnarray}}
\newcommand{\beqs}{\begin{eqnarray*}}
\newcommand{\eeqs}{\end{eqnarray*}}
\newtheorem{theorem}{Theorem}
\newtheorem{definition}{Definition}
\begin{document}

\title{Constraints on Bounded Motion and Mutual Escape for the Full 3-Body Problem}
\author{D.J. Scheeres\\Department of Aerospace Engineering Sciences\\The University of Colorado at Boulder\\scheeres@colorado.edu}
\date{\today}
\maketitle

\begin{abstract}

When gravitational aggregates are spun to fission they can undergo complex dynamical evolution, including escape and reconfiguration. Previous work has shown that a simple analysis of the full 2-body problem provides physically relevant insights for whether a fissioned system can lead to escape of the components and the creation of asteroid pairs. In this paper we extend the analysis to the full 3-body problem, utilizing recent advances in the understanding of fission mechanics of these systems. Specifically, we find that the full 3-body problem can eject a body with as much as 0.31 of the total system mass, significantly larger than the 0.17 mass limit previously calculated for the full 2-body problem. 
This paper derives rigorous limits on a fissioned 3-body system with regards to whether fissioned system components can physically escape from each other and what other stable relative equilibria they could settle in. We explore this question with a narrow focus on the Spherical Full Three Body Problem studied in detail earlier. 

\end{abstract}

\section{Introduction}

When a gravitational aggregate spins fast enough to fission or shed material from its surface, a fundamental question is whether components in the newly formed system in relative orbit about each other can undergo a mutual escape. This is a known process in nature, with the discovery of asteroid pairs \cite{vokrouhlicky2008pairs} and their linkage to formation from a rotational fission event \cite{pravec_fission}. The current paper extends previous analysis of this process, which was focused on the fission of contact binary bodies \cite{scheeres_F2BP, scheeres_icarus_fission, scheeres_F2BP_planar}, and considers \textcolor{black}{the next step in generality, specifically } this paper considers the fission of a full 3-body system of unequal masses. In so doing, we discover situations under which the fission limits on rigid 2-body systems can be violated. 

The specific problem studied in this paper belongs to the class of full body problems, which studies the dynamics of gravitational attracting rigid bodies that can rest on each other. Several different aspects of this problem have been studied, including the general problem of 2 bodies \cite{maciejewski, wang, cendra_marsden, scheeres_F2BP, scheeres_icarus_fission, scheeresPSS, scheeres_F2BP_planar}, the $N$-body problem of equal-sized spheres \cite{scheeres_minE, IAU_namur, IAU_hawaii, F4BP_chapter}, and most recently a general analysis of the spherical 3-body problem with arbitrary values of mass and size \cite{F3BP_scheeres}. While of interest mathematically, the study of this problem has also proven to be relevant to understanding the physical evolution of rubble pile asteroids, primitive solar system bodies that consist of self-gravitating boulders and grains. These bodies are subject to unique physical evolutionary forces that can cause their spin rates, and hence total angular momentum, to slowly change over time as an adiabatic process (reviewed in \cite{scheeres_minE}). This creates a situation where the total angular momentum and energy of the self-gravitating system can increase in time, eventually reaching a transition in a stable resting energy state and triggering a profound change in the system configuration. 

A key outcome once the system undergoes a change in system configuration is whether one of the bodies can be ejected from the system. This corresponds to the Hill Stability of the system, delineating whether or not the system can have one component that departs to infinity relative to the others. For the full 2-body problem, a general analysis shows that if the fissioned binary system has one component which has a mass less than $\sim$0.17 of the total system mass, then the system will have a positive energy after fission and can result in the two bodies mutually escaping, with the orbital energy coming from transferred rotational kinetic energy via gravitational torques (for systematic studies of this problem in terms of energy and angular momentum transfer see \cite{jacobson_icarus, boldrin_mnras}). This result has been shown to be consistent with and apparently occur in solar system bodies, with the discovery of asteroid pairs (asteroids that in the past appear to have emanated from the same locale with low relative speeds) and the correlation of their spin properties and composition with the full 2-body problem fission theory \cite{pravec_fission, polishook2014observations, polishook2014rotationally}. 

In this paper we consider conditions under which a system of three bodies in a stable resting configuration brought to fission spin rates can partially or fully escape. Also related to this is the identification of when, post fission, other possible stable states exist. This study builds on a recent paper \cite{F3BP_scheeres} identifying all stable minimum energy states in the full 3-body problem and finding conditions under which they fission or become unstable. 
Our rigorous analysis shows that some stable full 3-body relative equilibrium can eject a body of mass up to 0.31 of the total system mass when it undergoes  fission \textcolor{black}{with the mass being one of the single components}. This expanded mass fraction could explain the few, but significant, more recently found asteroid pairs that do not conform to the limits found from the 2-body problem (see Fig.\ 6 in \cite{AIV_interiors})

The outline of this paper is as follows. In Section 2 we define the spherical full 3-body problem and set up important definitions and concepts used throughout the paper. In Section 3 we define and prove a few fundamental results that enable our direct computations. In Section 4 the main results are presented. Section 5 discusses the implications and applications of these results. 

\section{Background}

\subsection{The Spherical, Finite Density 3-Body Problem}

In \cite{F3BP_scheeres} the Spherical, Finite Density 3-body problem (SF3BP) is introduced and analyzed in detail for its relative equilibria and their stability. This problem is a variation of the 3-body problem in that the massive bodies have a finite density, meaning that they have a finite physical extent and thus cannot come arbitrarily close to each other. Due to this, it is possible to have resting configurations as possible relative equilibria. The spherical restriction for this problem means that the bodies are assumed to be spheres, which simplifies the computation of the mutual gravitational potential and surface contact conditions. As another note, the assumption is made that the bodies in contact will experience frictional forces, and thus a system in a relative equilibrium will have no motion relative to each other at their contact point, meaning that at a relative equilibria member bodies will rotate in concert. This problem has been specifically posed and studied in a series of papers that have focused on non-spherical bodies and equal sized bodies. The current paper continues the study to issues of Hill stability and other possible end states for systems that undergo a transition in their stability due to a increase in the system angular momentum. 

The use of spherical bodies is an idealization, as for transfer of angular momentum and energy from rotational to translational motion it is crucial for the bodies to be non-spherical. However, previous research for the full 2-body problem \cite{scheeres_F2BP_planar} has definitely shown that the use of spherical bodies correctly captures the energetics and angular momentum of these dynamically evolving systems. This motives our current study of this simplified system. 

Consider three bodies, $\mathcal{ B}_i, i = 1,2,3$, each of which is a sphere of radius $R_i$ and, for convenience, assumed to have a common density $\rho$, giving each of them a mass of $M_i = 4\pi/3 \ \rho R_i^3$. 
The positions of these bodies can be denoted in $\mathbb{R}^3$ by Cartesian position vectors $\bfm{D}_i$. The relative positions of these bodies are denoted as $\bfm{D}_{ij} = \bfm{D}_j - \bfm{D}_i$ and have the fundamental rigid body constraint $| \bfm{D}_{ij} | \ge (R_i + R_j)$ for $i\ne j$. This lower bound, due to the bodies having finite density, is what enables resting equilibria to occur. Each of the spheres can carry angular momentum in their spin rate, although due to their symmetry the specific orientation of these spheres are arbitrary in any frame. Thus, the internal relative configuration space of the system, $\bfm{Q}$, is completely specified by only three quantities
\beq
	\bfm{Q} & = & \left\{ D_{12}, D_{23}, D_{31} \ | \  \right. \nonumber \\
		& & \left. D_{ij} \ge (R_i + R_j)  \ \& \ |D_{ij} - D_{jk}| \le D_{ki} \le |D_{ij} + D_{jk}| \right\}
\eeq

The existence and stability of relative equilibria can be analyzed and determined through finding the singular points and manifold curvature of the amended potential defined in \cite{scheeres_minE} 
\beq
	\mathcal{E}(\bfm{Q}) & = & \frac{H^2}{2 I_H(\bfm{Q})} + \mathcal{U}(\bfm{Q})
\eeq
where $H$ is the total angular momentum magnitude of the system, $I_H$ is the total system moment of inertia about the system angular momentum vector, and $\mathcal{U}$ is the gravitational potential energy of the system. The above amended potential is a lower bound on the total system energy, 
\beq
	\mathcal{E}(\bfm{Q}) & \le & E
\eeq
where $E = T_o + T_r + \mathcal{U}$, where $T_o$ is the translational kinetic energy, $T_r$ is the rotational kinetic energy and $\mathcal{U}$ is the gravitational potential energy. 
This constraint, including its sharpness, has been proven previously \cite{scheeres_minE, F3BP_scheeres} and is used extensively in the following. 

Based on the analysis in \cite{F3BP_scheeres} we only consider configurations where the three bodies and their relative velocities all lie in a common plane. Then the moment of inertia and gravitational potential energy are defined as
\beq
	I_{H} & = & \frac{1}{M_1 + M_2 + M_3} \left[ M_1 M_2 D_{12}^2 + M_2 M_3 D_{23}^2 + M_3 M_1 D_{31}^2\right] + I_S \\
	I_S & = & \frac{2}{5} M_1 R_1^2 + \frac{2}{5} M_2 R_2^2 + \frac{2}{5} M_3 R_3^2 \\
	\mathcal{U} & = & - \mathcal{ G} \left[ \frac{M_1 M_2}{D_{12}} + \frac{M_2 M_3}{D_{23}} + \frac{M_3 M_1}{D_{31}} \right] 
\eeq
where $\mathcal{G}$ is the gravitational constant. 

To simplify the discussion, normalize the system with a length and a mass scale. The length scale used is $R_T = R_1 + R_2 + R_3$, while the mass scale is $M_T = M_1 + M_2 + M_3$. Denote $m_i = M_i / M_T$, $r_i = R_i / R_T$, and $d_{ij} = D_{ij} / R_T$. In normalized coordinates the fundamental quantities take on the values
\beq
	\mathcal{U} & = & - \left[ \frac{m_1 m_2}{d_{12}} + \frac{m_2 m_3}{d_{23}} + \frac{m_3 m_1}{d_{31}} \right] \\
	I_{H} & = & m_1 m_2 d_{12}^2 + m_2 m_3 d_{23}^2 + m_3 m_1 d_{31}^2 + I_S \label{eq:IH} \\
	I_S & = & \frac{2}{5} m_1 r_1^2 + \frac{2}{5} m_2 r_2^2 + \frac{2}{5} m_3 r_3^2
\eeq
with the angular momentum being normalized by the dividing factor $\sqrt{\mathcal{ G} M_T^3 R_T}$ and the energy normalized by the dividing factor $\mathcal{ G} M_T^2 / R_T $. For $H$, $\mathcal{E}$ and $E$ the same notational designation is kept for the normalized values.  

The normalizations provide two identities:
\beq
	r_1 + r_2 + r_3 & = & 1 \\
	m_1 + m_2 + m_3 & = & 1
\eeq
There are also fundamental relationship between the $r_i$ and the $m_i$, assuming constant density. 
\beq
	m_i & = & \frac{r_i^3}{r_1^3 + r_2^3 + r_3^3} \label{eq:massnorm} \\
	r_i & = & \frac{m_i^{1/3}}{m_1^{1/3} + m_2^{1/3} + m_3^{1/3}}
\eeq
With these identities the normalized configuration space can be stated as
\beq
	\bfm{q} & = & \left\{ d_{12}, d_{23}, d_{31} \ | \ d_{ij} \ge 1-r_k  \ \& \ |d_{ij} - d_{jk}| \le d_{ki} \le |d_{ij} + d_{jk}| \right\}
\eeq

Due to the symmetry of the problem the study can be restricted to 
\beq
	0 \le m_3 \le m_2 \le m_1 \le 1 \\
	0 \le r_3 \le r_2 \le r_1 \le 1 
\eeq
This region is shaded in Fig.\ \ref{fig:triangle}. There are 5 other equivalent triangles defined by reordering the different inequalities given above. The approach taken will be to exhaustively study all possible relative equilibria in the denoted region, the results of which can then be easily applied to all other regions. 

\begin{figure}[h!]
\centering
\includegraphics[scale=0.35]{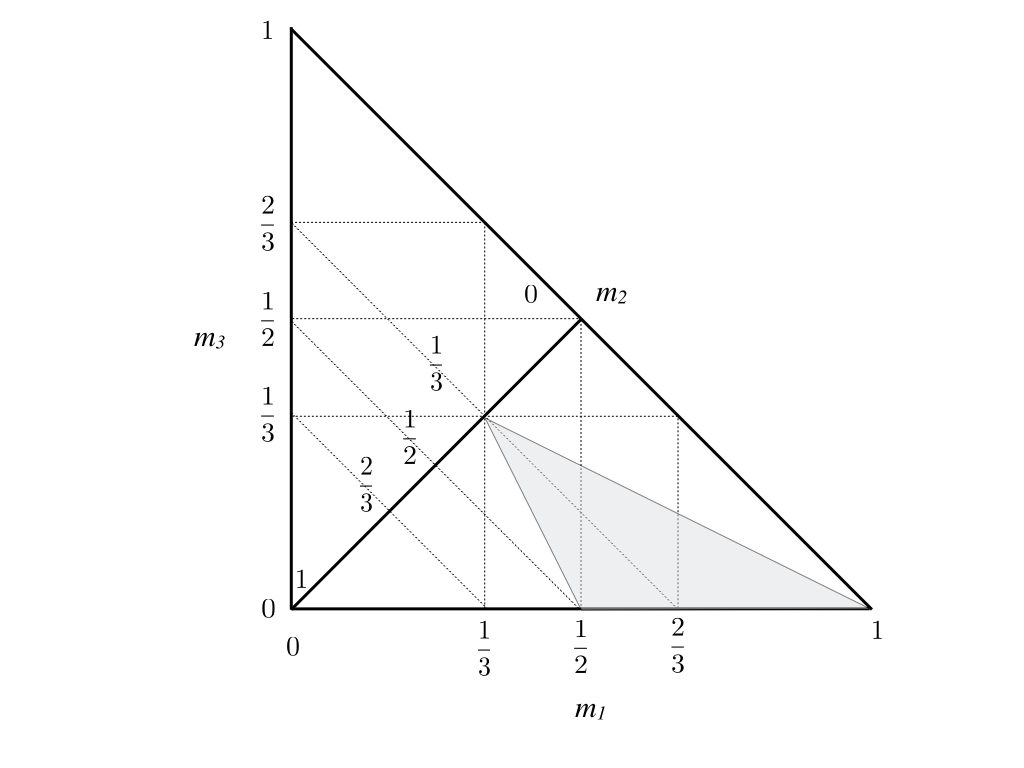}
\caption{Triangle defined for the bodies with the region of study shaded. Mass $m_1$ is measured along the horizontal axis, $m_3$ along the vertical, and $m_2$ along the diagonal, increasing from zero at the hypotenuse to unity at the origin.}
\label{fig:triangle}
\end{figure}

With this convention, there are additional constraints for the masses and radii.
\beq
	\frac{1}{3} \le (r_1, m_1) \le 1 \\
	0 \le (r_3, m_3) \le \frac{1}{3}  \\
	0 \le (r_3, m_3) \le (r_2, m_2) \le \frac{1}{2}
\eeq

\subsection{Stable Relative Equilibria and Fission}

The full 3-body problem was analyzed for all relative equilibria in \cite{F3BP_scheeres}, including the case of resting equilibria, along with the specific points at which stable relative equilibria can bifurcate into or out of existence. In the following a few important points are recounted.
Beyond the existence, stability and relative bifurcation pathways for these stable relative equilibria, whether or not they exist at the fission of separate types of configuration was not explored. This is discussed as a major component of this paper. 

\subsubsection{Lagrange Resting Equilibria}

In general it was found that at low angular momentum values only the Lagrange Resting (LR) configuration exists, shown in Fig.\ \ref{fig:resting}. 
The relative distances between the bodies satisfy the equality $d_{ij} = 1 - r_k$. 
As the angular momentum is increased, at the specific angular momentum and energy
\beq
	H^2_{LR} & = & \frac{I_{H_{LR}}^2}{(1-r_3)^3} \label{eq:LRH2} \\
	\mathcal{E}_{LR} & = & \frac{I_{H_{LR}}}{2(1-r_3)^3} + \mathcal{U}_{LR} \label{eq:LRE}
\eeq
the LR configuration ceases to exist and the configuration will fly apart and start a period of chaotic dynamical evolution. 
The associated spin rate of the system is
\beq
	\Omega^2_{LR} & = & \frac{1}{(1-r_3)^3} \label{eq:LRFSR}
\eeq

The initial disruption of the LR configuration occurs with the larger two grains separating, rotating about the smaller grain. However, once this initial separation occurs, the centrifugal forces between the remaining bodies in contact are exceeded, causing the entire system to fall apart and undergo a period of reimpact and dynamical evolution. 

\begin{figure}[h!]
\centering
\includegraphics[scale=0.35]{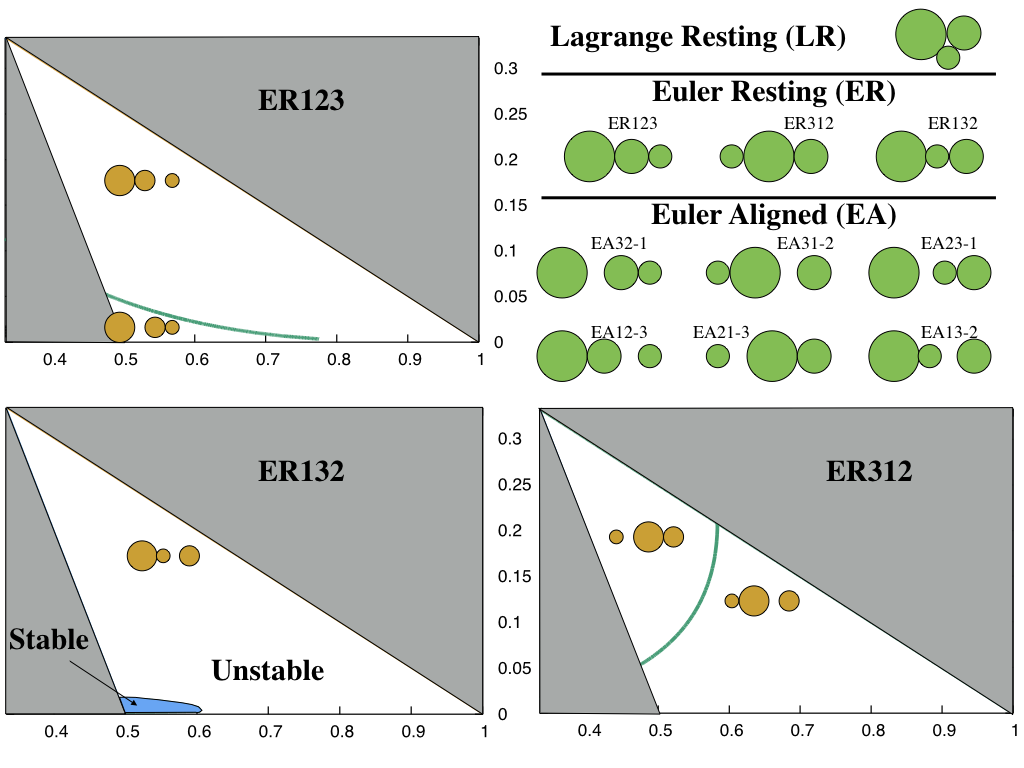}
\caption{Stable resting equilibria and their fission patterns. Note that the ER132 configuration fissions into a stable relative equilibrium in the region indicated,  the only instance of such an outcome across all of the resting equilibria. All other fissions immediately enter an unstable environment.}
\label{fig:resting}
\end{figure}

\subsubsection{Euler Resting Equilibria}

At specific non-zero levels of angular momentum, it is possible for the grains to begin to stably rest on each other in a line, what we call the Euler Resting (ER) configurations (see Fig.\ \ref{fig:resting}). If the bodies lie in sequence $i$, $j$ and $k$, with body $j$ between bodies $i$ and $k$, we call this an ERijk configuration. The relative distances satisfy $d_{ij} = 1 - r_k$, $d_{jk} = 1 - r_i$ and $d_{ki} = 1 + r_j$. 

These configurations (denoted with the subscript $ijk$) will be stable when the angular momentum satisfies  
\beq
	H^2 & \ge & \frac{I_{H_{ijk}}^2}{(1+r_j)^3} \label{eq:ERX}
\eeq
and will be stable up to 
\beq
	H^2_{ijk} & = & \frac{I_{Hijk}^2 }{(1+r_j)^3} \min\left[ \frac{1 + \frac{m_j}{m_i}\left(\frac{1+r_j}{1-r_i}\right)^2}{1 + \frac{m_j}{m_i}\frac{1-r_i}{1+r_j} },   \frac{1 + \frac{m_j}{m_k}\left(\frac{1+r_j}{1-r_k}\right)^2}{1 + \frac{m_j}{m_k}\frac{1-r_k}{1+r_j} } \right]  \label{eq:ERH2} \\
	\mathcal{E}_{ijk} & = &  \frac{H^2_{ijk}}{2 I_{Hijk}} + \mathcal{U}_{ijk} \label{eq:ERE} 
\eeq
at which point they separate and fission. 
The spin rate at which fission occurs is then
\beq
	\Omega^2_{ijk} & = & \frac{1}{(1+r_j)^3} \min\left[ \frac{1 + \frac{m_j}{m_i}\left(\frac{1+r_j}{1-r_i}\right)^2}{1 + \frac{m_j}{m_i}\frac{1-r_i}{1+r_j} },   \frac{1 + \frac{m_j}{m_k}\left(\frac{1+r_j}{1-r_k}\right)^2}{1 + \frac{m_j}{m_k}\frac{1-r_k}{1+r_j} } \right] \label{eq:ERFSR}
\eeq

Note that for 3 bodies there are 3 unique orderings of the bodies, shown in Fig.\ \ref{fig:resting}, ER123, ER132 and ER312. For an ERijk configuration there are two possible modes for fission, with loss of contact occurring between bodies $i,j$ or $j,k$. The diagrams in Fig.\ \ref{fig:resting} indicate the mass ratio values where the different separations occur. These figures are reprised from \cite{F3BP_scheeres}, but that paper showed the limits in terms of the radii of the bodies, whereas this figure shows them in terms of the mass of the bodies. For configuration ER123, in the top left, most of the parameter space has the smaller body separating from the larger bodies, except for the case of nearly equal-sized larger bodies and small third body, where the larger body separates. For ER312 in the lower right we note that for more equal-sized bodies the smaller will separate while for a smaller third body the intermediate body will separate. Finally, for ER132 we note that the configuration always fissions by the intermediate mass separating. 

The stability of the fissioned systems were also investigated. For ER123 and ER312 the post-fission state is always in a non-equilibrium state, meaning that the system will immediately undergo chaotic evolution. For the ER132 configuration there is a small area of parameter space where the system will fission directly into a stable orbital relative equilibrium, shown in the lower left of Fig.\ \ref{fig:resting}. 

\subsubsection{Euler Aligned Equilibria}

Also shown in Fig.\ \ref{fig:resting} are the Euler Aligned configurations. These appear in a more complex bifurcation sequence and, except for the small stable region for the ER132 fission mentioned above, always have an unstable component close to and terminating at the ER fission points, and a stable component that lies at a farther distance. As the angular momentum becomes larger the 6 distinct EA configurations are the only stable relative equilibria in the full 3-body problem, and exist as angular momentum becomes arbitrarily large. 

\section{Supporting Results}

\subsection{Classification of Final Motions}

Chazy has identified several classes of final motions in the 3-body problem, nicely summarized in \cite{arnold1988mathematical}, that we can directly borrow for our own classifications in the SF3BP. Our classification combines the concepts of hyperbolic and parabolic solutions, as for our case they yield indistinguishable outcomes. We do not consider the class of oscillatory motions as this is estimated to have measure zero \cite{arnold1988mathematical}. We note that all of the Chazy classes of motion do not necessarily live within the SF3BP as we do not include the point mass case. In addition, with the addition of rotation of the bodies, a relevant question not addressed here is whether there are additional interesting sub-classes of motion. 

\begin{definition}{\bf Motion Classifications}
Define three broad classes of motion, combined together from Chazy's general classes of motion:
\begin{enumerate}
\item
Bounded Motion ($\mathbf{B}$): $\sup_{t\ge t_0} |d_{ij}| < \infty \ \forall i,j$
\item
Hyper/Parabolic--Elliptic Motion ($\mathbf{HE}_k$): $|d_{jk}|, |d_{ki}| \rightarrow \infty$, $|\dot{d}_{jk}|, |\dot{d}_{ki}| \rightarrow c_k \ge 0$, $\sup_{t\ge t_0} |d_{ij}| < \infty$
\item
Hyper/Parabolic Motion ($\mathbf{H}$): $|d_{ij}| \rightarrow \infty$, $|\dot{d}_{ij}| \rightarrow c_{ij} \ge 0  \ \forall i,j$
\end{enumerate}
\end{definition}

Given these classifications, it is possible to identify either necessary or sufficient conditions for which these outcomes can occur. 
\begin{theorem}
\label{thm:1}
Necessary or Sufficient conditions for a Spherical Full 3-Body system to lie in each of these classes are: 
\begin{enumerate}
\item
A system is Bounded ($\mathbf{B}$) if $E < \min_{ij}\mathcal{U}_{ij}$ 
\item
A system can be $\mathbf{HE}_k$ for any k only if $E \ge \min_{ij}\mathcal{U}_{ij} \ \forall i,j$ 
\item
A system can be $\mathbf{HE}_k$ for a given k only if $E \ge \min_{ij\ne k}\mathcal{U}_{ij} $
\item
A system can be $\mathbf{H}$ only if $E \ge 0$ 
\item
A system is $\mathbf{HE}$ or $\mathbf{H}$ if $E - T_r^M > 0$, where $T_r^M = \sup_{t\ge t_0} T_r$ is an upper bound on the total rotational kinetic energy of the system. 
\end{enumerate}

\end{theorem}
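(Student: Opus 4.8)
The unifying tool will be conservation of the total energy $E = T_o + T_r + \mathcal{U}$ together with the amended-potential inequality $\mathcal{E}(\bfm{q}) = H^2/(2 I_H) + \mathcal{U} \le E$; since $T_o, T_r \ge 0$ this gives in particular $\mathcal{U}(\bfm{q}(t)) \le E$ at every instant. For items 1--4 the plan is to evaluate $\mathcal{U}$ in the asymptotic regime prescribed by each motion class and read off the constraint that conservation of $E$ then forces. I would first record that $\mathcal{U}_{ij} = -m_i m_j/(r_i+r_j) = -m_i m_j/(1-r_k)$ is the minimum (contact) value of the pair potential, so that $-m_i m_j/d_{ij} \ge \mathcal{U}_{ij}$ for every admissible $d_{ij}$, and that all three $\mathcal{U}_{ij}$ are strictly negative.

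For item 1 I would argue by contraposition, using that a non-bounded motion (within the classes considered, oscillatory motions being excluded as measure zero) must be either $\mathbf{HE}_k$ for some $k$ or $\mathbf{H}$. In the $\mathbf{HE}_k$ case the two terms of $\mathcal{U}$ containing $k$ vanish as $d_{jk}, d_{ki} \to \infty$, so for large $t$ one has $\mathcal{U}(t) = -m_i m_j/d_{ij}(t) + \epsilon(t) \ge \mathcal{U}_{ij} + \epsilon(t)$ with $\epsilon(t) \to 0$; combined with $\mathcal{U}(t) \le E$ and passing to the limit this yields $E \ge \mathcal{U}_{ij} \ge \min_{ij}\mathcal{U}_{ij}$. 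In the $\mathbf{H}$ case every term of $\mathcal{U}$ tends to zero, so the same passage gives $E \ge 0 > \min_{ij}\mathcal{U}_{ij}$. Either way $E \ge \min_{ij}\mathcal{U}_{ij}$, which is the contrapositive of the claim. Items 2 and 3 are refinements of exactly this computation: escape of \emph{some} body requires its complementary pair to remain bound, giving $E \ge \min_{ij}\mathcal{U}_{ij}$ (item 2), while for a \emph{prescribed} escaper $k$ the surviving pair is forced to be the complement of $k$, giving the sharper $E \ge \min_{ij \ne k}\mathcal{U}_{ij}$ (item 3). Item 4 is the degenerate case in which no pair survives: in class $\mathbf{H}$ all terms of $\mathcal{U}$ vanish, so the limit of $\mathcal{U}(t) \le E$ is $E \ge 0$.

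Item 5 is where the substance lies, and I would prove it by a Lagrange--Jacobi (virial) argument rather than a potential limit. Writing $I_{\rm orb} = I_H - I_S = m_1 m_2 d_{12}^2 + m_2 m_3 d_{23}^2 + m_3 m_1 d_{31}^2$ for the orbital moment of inertia, so that failure of boundedness is exactly $I_{\rm orb} \to \infty$, I would differentiate twice in the center-of-mass frame and use that $\mathcal{U}$ is homogeneous of degree $-1$ in the separations to get $\ddot I_{\rm orb} = 4 T_o + 2\mathcal{U} + 2\sum_i \bfm{D}_i \cdot \bfm{F}_i^c$, where $\bfm{F}_i^c$ collects the contact forces. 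The decisive point is that this contact contribution is nonnegative: a normal reaction is repulsive and directed along the line of centers, contributing $N\,|\bfm{D}_i - \bfm{D}_j| \ge 0$, whereas a friction force is tangential to that line and contributes nothing. Hence $\ddot I_{\rm orb} \ge 4 T_o + 2\mathcal{U} = 2 T_o + 2(E - T_r) \ge 2(E - T_r^M) > 0$ for all $t$ (using $T_o \ge 0$ and $T_r \le T_r^M$). A twice-differentiable function whose second derivative is bounded below by a positive constant diverges, so $I_{\rm orb} \to \infty$, at least one $d_{ij} \to \infty$, and the system is therefore $\mathbf{HE}$ or $\mathbf{H}$.

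The step I expect to be the main obstacle is making item 5's virial identity rigorous in the presence of spin and intermittent frictional contacts, since here energy genuinely flows between orbital and rotational motion. The uniform bound $T_r \le T_r^M$ is what absorbs that flow, and the sign computation for the contact term (normal $\ge 0$, tangential $= 0$) is the crux that preserves the clean inequality $\ddot I_{\rm orb} \ge 2(E - T_r^M)$; once both are secured the divergence of $I_{\rm orb}$, and hence escape, is automatic. For items 1--4 the only delicate point is the limit when the surviving pair is on a bounded but oscillating orbit so that $\mathcal{U}(t)$ need not converge, which the pointwise bound $-m_i m_j/d_{ij} \ge \mathcal{U}_{ij}$ circumvents.
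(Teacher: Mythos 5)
Your proposal is correct and takes essentially the same approach as the paper: items 1--4 follow by passing to the limit in an energy lower bound as bodies escape (the paper uses $\mathcal{E} \le E$ together with $I_H \to \infty$ so $H^2/(2I_H) \to 0$, while you use the marginally more elementary $\mathcal{U} \le E$), and item 5 is the same classical Lagrange--Jacobi/virial argument exploiting the degree $-1$ homogeneity of the potential to get $\ddot{I}$ bounded below by the positive constant $2(E - T_r^M)$. Your explicit sign analysis of the contact-force term in the virial identity is a minor refinement of the paper's treatment, which instead invokes the classical point-mass result of Pollard and handles contacts by assuming conservative interactions.
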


\begin{proof}
The mechanics of the proof are relatively straightforward, and rely on the inequalities $\mathcal{E} \le E$ and $d_{ij} \ge r_i + r_j  = 1 - r_k > 0$, and on previous proofs. \\

We first establish an inequality on the mutual gravitational potential between two grains, $i$ and $j$: $\mathcal{U}_{ij} = - \frac{m_i m_j}{d_{ij}} \ge - \frac{m_i m_j}{1-r_k}$, which reduces to $d_{ij} \ge r_i + r_j$. \\

Next we show that whenever a single body escapes, say body $k$, then $\mathcal{E} \rightarrow \mathcal{U}_{ij}$. 
First note that the moment of inertial $I_H$ is unbounded whenever any body undergoes escape. Say that body $k$ escapes, leading to $d_{jk}, d_{ki} \rightarrow \infty$ and $\mathcal{U}_{jk}, \mathcal{U}_{ki} \rightarrow 0$. Then we have, by definition of $I_H$, that $I_H \ge m_k (m_i + m_j) \min( d_{jk}, d_{ki} )$. However, as both of these distances approach $\infty$ we see that $I_H$ must also approach infinity. Thus for a finite value of $H^2$, we have that $\frac{H^2}{2 I_H} \rightarrow 0$ and $\mathcal{E} \rightarrow \mathcal{U}_{ij}$. \\

Now we prove the different items in order. 
\begin{description}
\item[1, 2 and 3] 
Assume a system with $E < \min_{ij}\mathcal{U}_{ij}$ has a body $k$ that escapes. Since $k$ escapes we have $\mathcal{E} \rightarrow \mathcal{U}_{ij}$. However, as $\mathcal{E} \le E < \min_{ij}\mathcal{U}_{ij}$, this is a contradiction, meaning that body $k$ cannot escape if $E < \min_{ij}\mathcal{U}_{ij}$, and establishing \#1. Conversely, if instead body $k$ escapes then $ \min_{ij\ne k}\mathcal{U}_{ij} \le \mathcal{E} \le E$, establishing \#3. More generally, if any body leaves then $\min_{ij}\mathcal{U}_{ij} \le \mathcal{E} \le E$, establishing \#2. 
\item[4]
Assume a system has $\mathbf{H}$ motion, meaning that all of its components escape relative to each other. Then $\mathcal{E} \rightarrow 0 \le E$, establishing \#4. 
\item[5]
To prove this result, we appeal to the classical proof of escape in the $N$-body point-mass problem, as discussed in \cite{pollard} and applied to the full body problem in \cite{scheeres_F2BP}. From \cite{scheeres_F2BP} we note that the polar moment of inertia can be shown to equal 
\beq
	I_p & = & \frac{3}{2} I_S + \sum_{i < j} \frac{m_i m_j}{M_T} d_{ij}^2 
\eeq
As our bodies are spherical and have point mass potentials, the mutual potential is a homogenous function degree of -1. Thus, we can apply the classical result for the second time derivative of $I_p$, yielding
\beq
	\ddot{I}_p & = & 2 \sum_{i < j} \frac{m_i m_j}{M_T} v_{ij}^2 + 2 \mathcal{U} 
\eeq
For a full body problem $T_o = \frac{1}{2} \sum_{i < j} \frac{m_i m_j}{M_T} v_{ij}^2 = T - T_r \ge 0$, leading to
\beq
	\ddot{I}_p & = & 2 (T - T_r)  + 2 ( E - T_r ) 
\eeq
From the theorem statement, we note that $E - T_r \ge E - T_r^M > 0$. Further, by definition we have $T-T_r \ge 0$. Thus, the quantity $\ddot{I}_p > 0$, where $I_p > 0$ by definition. We note the need to make the assumption that $E$ is conserved, implying that there are no impacts or only conservative interactions. It is then classical to show that $I_p \rightarrow \infty$ \cite{pollard}, and thus that at least one body will escape. We note that without the conservative assumption, if two of the bodies strike each other the total energy can be decreased, potentially to the point where the condition is violated. 
\end{description}
\end{proof}

\section{Final States of Fissioned Systems in the Spherical, Full 3-body Problem}

Now consider the system energy of different termination fissions for the spherical full 3-body problem in order to constrain the possible post-fission outcomes. We will apply the conditions from Theorem \ref{thm:1} in order to map out parameter values that lead to different possible outcomes. 

\subsection{Lagrange Resting Fission}

There are a few questions of interest that can be answered. Following a fission event, and at that given level of angular momentum, when is the fissioned system bound, when could it undergo escape, and what other stable states exist that the system could settle into. These are addressed in the following.

The fission condition for the Lagrange Resting (LR) configuration are given in Eqns.\ \ref{eq:LRH2} and \ref{eq:LRE}.  Figure \ref{fig:LRF} compares the fission energy of the LR configuration with the sufficiency condition for bounded motion and the necessary conditions for the various escape conditions. The plot is generated by differencing the fission energy with the energy levels, and plotting the level surfaces where they equal each other. The computations are carried out using GNUPLOT 5.0. 
Figure \ref{fig:LRF} shows these comparisons for a fissioned LR system across the range of masses $m_3 \le m_2 \le m_1$. We note a pattern which is qualitatively repeated for the fission of the ER configurations as well. 
First, for systems with nearly equal mass we find that the fissioned system is bounded, meaning no body can escape. If the smaller body is less than 0.25 of the total system mass then it can be ejected, and bounded motion is no longer guaranteed. The necessary condition for ejection of body 2 hold when the mass of that body is small enough, represented by the long line that runs along the right, upper edge of the triangle. Only when both body 2 and 3 are smaller than $\sim 0.1$ each can body 1 be ejected instead, with this condition very close to the condition for all three bodies being able to mutually escape. We note that as the mass of body 3 goes to zero the necessary conditions for ejection all converge on a value of $\sim0.83$ along the body 1 axis, which is the classical limit for the full 2-body problem. 

\begin{figure}[htb]
\centering
\includegraphics[scale=0.35]{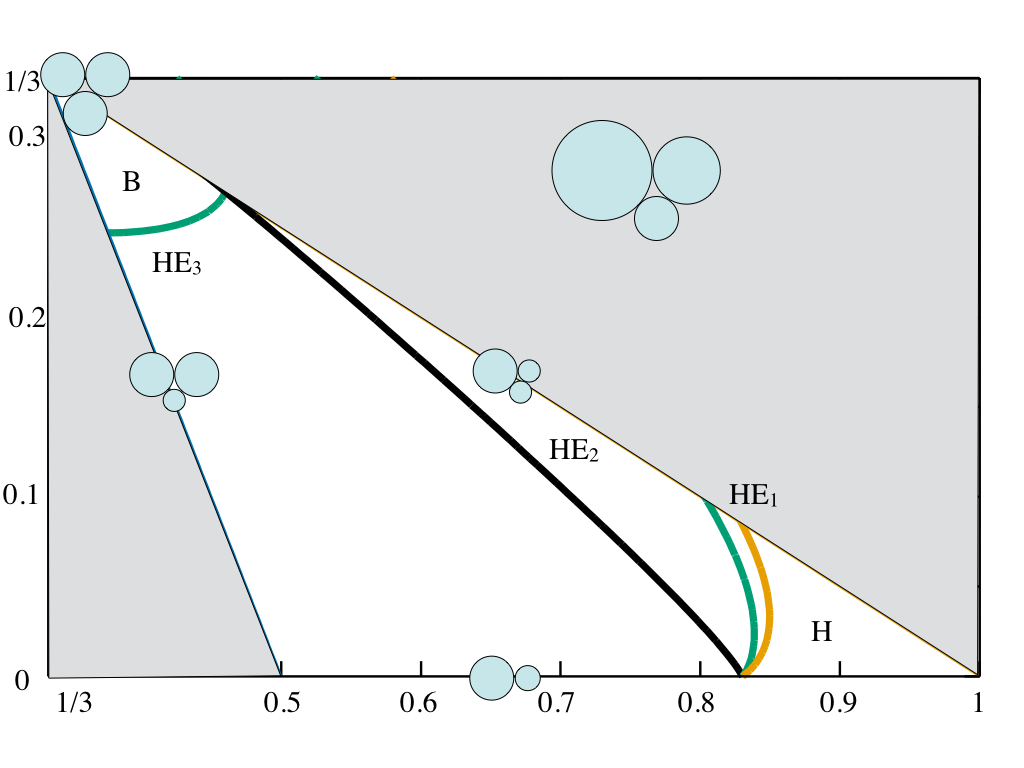}
\caption{Diagram showing mass fractions for different Hill Stability conditions to be satisfied when the Lagrange Resting configuration undergoes fission. 
\textcolor{black}{The vertical axis represents the mass fraction of the smallest component while the horizontal axis is the largest component. The different sized collections indicate the qualitative geometry of the different limiting lines on the diagram. The regions are labeled as a function of what sort of Hill stability ensures following rotational fission, as described in the text. B represents bounded motion, while HE1, HE2 and HE3 represent regions where one of the bodies can be ejected, and H represents the region of positive energy when complete escape of all bodies relative to each other can occur.  }}
\label{fig:LRF}
\end{figure}

Figure \ref{fig:LRmER} addresses the next question about whether, at the angular momentum at which the LR configuration fissions, the different ER configurations have undergone fission as of yet, or if they have stablized. If they have not fissioned yet (to the left of colored lines in the diagram), and if they have undergone stabilization (there is only one small region for the ER132 configuration where this has not occurred yet), then that ER configuration constitutes a lower-energy state that the system may be able to settle into. We note the trend that for smaller mass values for bodies 2 and 3 the LR configuration does not have a lower ER configuration to settle in. There is no strong correlation between the necessary conditions for escape and the existence of these lower energy states, indicating that there are relatively large regions of parameter space where either escape or settling of the configuration is possible. For the small region where the ER132 configuration has not stabilized yet when the LR configuration fissions, \textcolor{black}{indicated in the lower left of Fig.\ \ref{fig:LRmER},} we note that this area lies within the region where the ER132 configuration fissions directly into a stable EA13-2 relative equilibrium\textcolor{black}{, as shown in Fig.\ \ref{fig:resting} in the bottom left panel}. 

\begin{figure}[htb]
\centering
\includegraphics[scale=0.35]{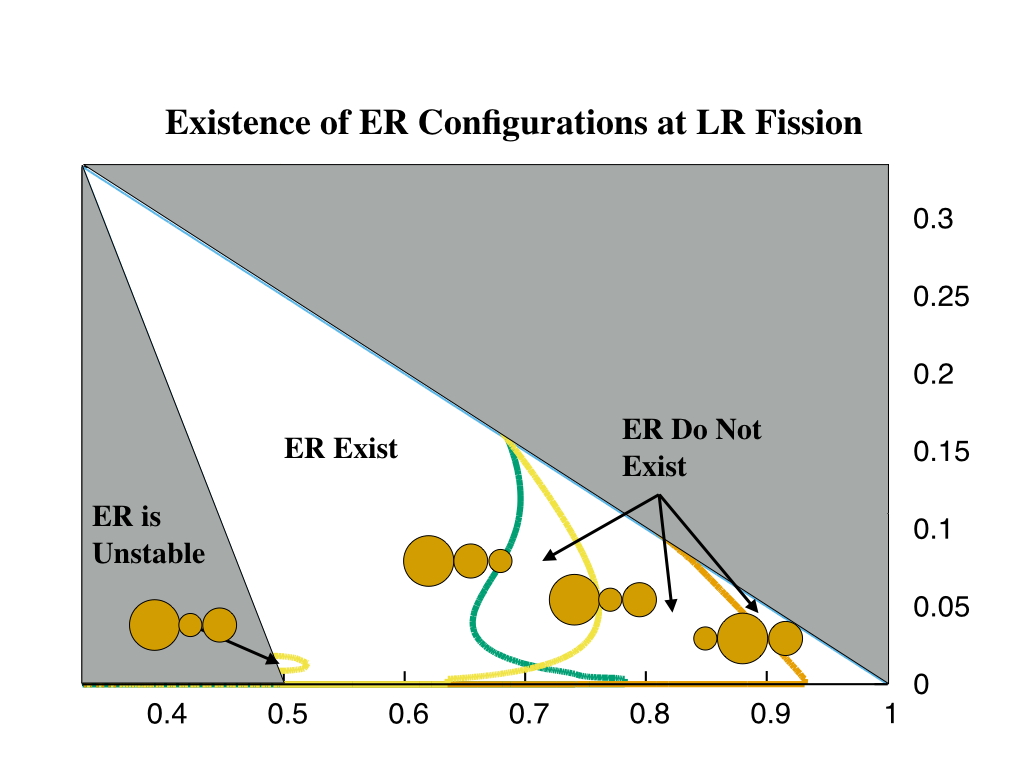}
\caption{Diagram showing whether or not a given ER configuration exists when the LR configuration fissions. }
\label{fig:LRmER}
\end{figure}

It is more difficult to ascertain whether a stable EA relative equilibrium exists across all of the LR fission points. The difficulty lies in that to determine the angular momentum at which a given EA configuration bifurcates into existence requires the iterative solution of a non-linear equation. Thus, at every point of the diagram an iterative computation would have to be made, which is possible does but does not lend itself to the computation of our diagrams. However, we know that the stable EA configurations exist at values of angular momentum lower than at which the ER fission occurs. Thus, prior to the \textcolor{black}{rotational fission transitions indicated by the} colored lines in Fig.\ \ref{fig:LRmER} the EA configuration already exists and is available as a stable final state of the system given sufficient energy dissipation. This question is addressed in a different manner later in this section. 

\subsection{Euler Resting Fission}

The fission condition for Euler Resting (ER) configurations are a bit more complex and given by Eqns.\ \ref{eq:ERH2} and \ref{eq:ERE}. The questions are more limited for these fissions, as by definition there will always be an EAij-k or EAkj-i orbital configuration that exists at the angular momentum value at which the ER configuration fissions. In Figs.\ \ref{fig:ERH123}, \ref{fig:ERH132} and \ref{fig:ERH312} we show when conditions for bounded motion and escape are satisfied for the fission of resting configurations ER123, ER132 and ER312, respectively. As noted above, these diagrams are all qualitatively similar to each other and to the LR diagram, although the specific limits shift around. Of special interest in all of these diagrams is the mass fraction of the smallest body when it is first susceptible to ejection from the system. As discussed earlier, for the full 2-body problem this ratio equals $\sim0.17$. From analysis of the Hill Stability diagrams we find that the ER132 configuration has the highest possible ejected mass fraction when it fissions, at a value of 0.311. This significantly extends the mass fraction of bodies that could fission and form asteroid pairs due to post-fission internal interactions alone. 

\begin{figure}[htb]
\centering
\includegraphics[scale=0.25]{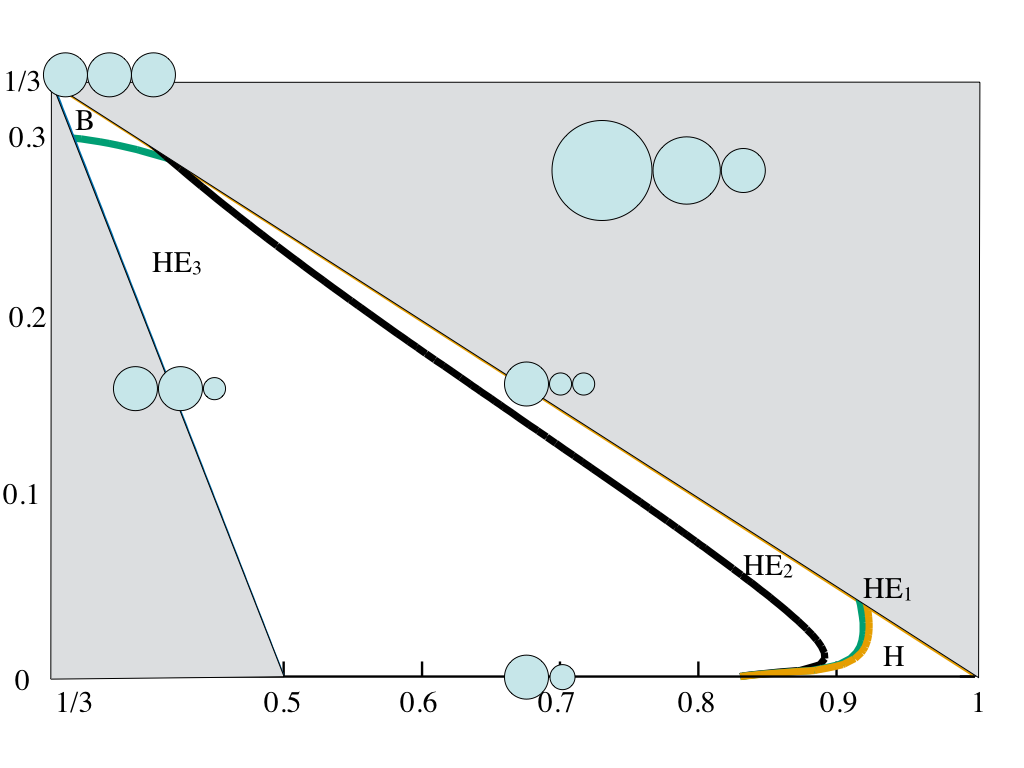}
\caption{Diagram showing mass fractions for different Hill Stability conditions to be satisfied for the ER123 configuration fission condition. The highest mass fraction that can be ejected is seen to be 0.303.}
\label{fig:ERH123}
\end{figure}

\begin{figure}[htb]
\centering
\includegraphics[scale=0.25]{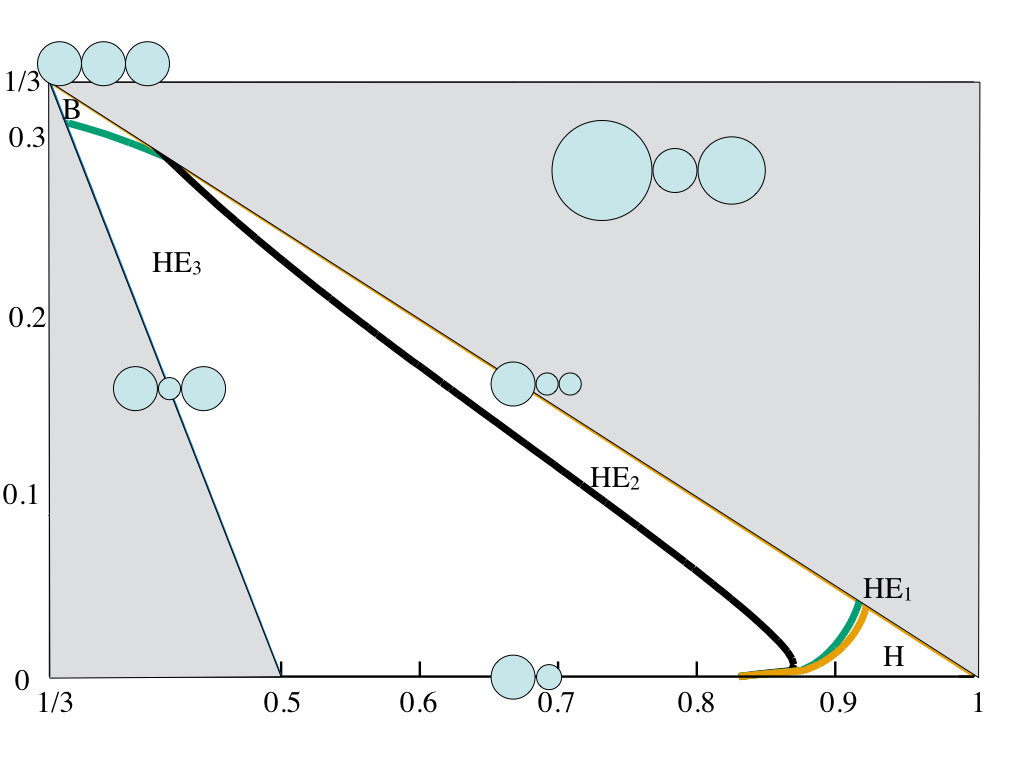}
\caption{Diagram showing mass fractions for different Hill Stability conditions to be satisfied for the ER132 configuration fission condition. The highest mass fraction that can be ejected is seen to be 0.311.}
\label{fig:ERH132}
\end{figure}

\begin{figure}[htb]
\centering
\includegraphics[scale=0.25]{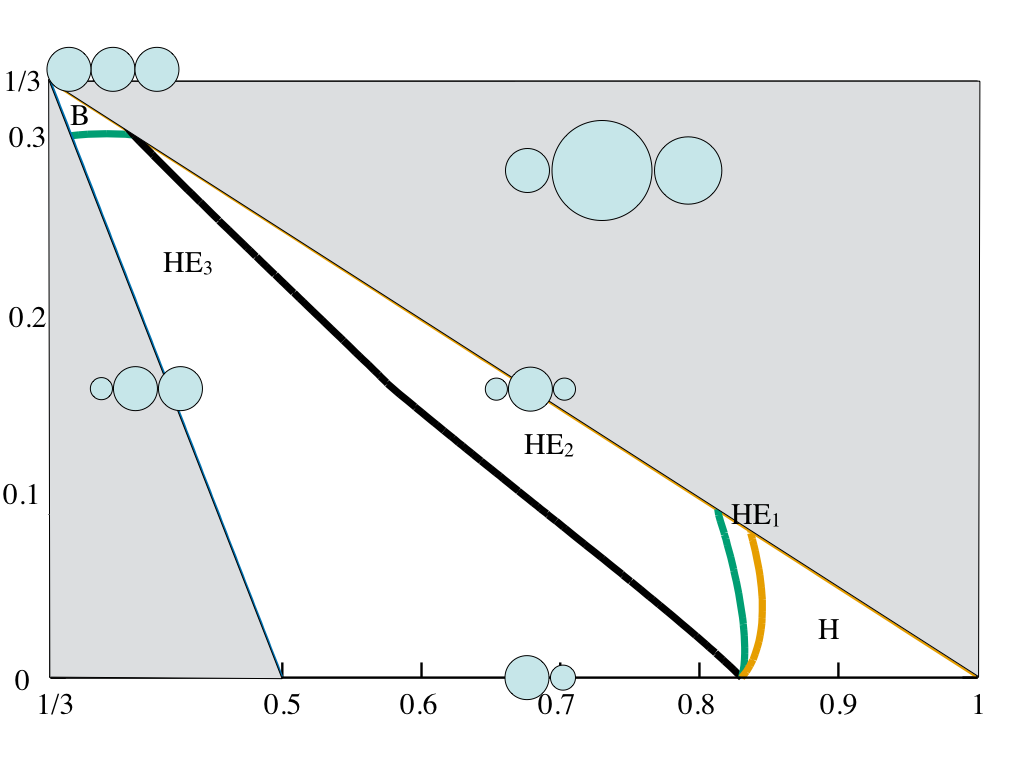}
\caption{Diagram showing mass fractions for different Hill Stability conditions to be satisfied for the ER312 configuration fission condition. The highest mass fraction that can be ejected is seen to be 0.304.}
\label{fig:ERH312}
\end{figure}

\clearpage

\subsection{Energy -- Angular Momentum Diagrams}

The mass triangle diagrams are able to clearly convey the conditions under which a fissioned system can undergo escape, and in indicating when the ER conditions exist. However, to convey when the EA configurations exist is more complex, and better shown through plots showing the relative energy of the different stable configurations as a function of angular momentum. 
These energy -- angular momentum plots were defined in earlier publications (see e.g., \cite{scheeres_minE}). Figure \ref{fig:EH_study} shows a detailed and deconstructed energy -- angular momentum plot for the case of $m_1=1/2$, $m_2=1/3$ and $m_3=1/6$, to indicate how the LR, ER and EA configurations are related to each other and to the necessary conditions. The combined plot, in the upper left, shows all of the different stable energy states and creates a complex picture as many of the states are seen to exist at similar levels of angular momentum and energy. Note that even though existence curves may cross, the different relative equilibrium configurations are geometrically distinct from each other. The ER configurations are stationary when they exist, whereas the distance between the EA configurations will vary as the angular momentum is increased. We note that the EA configuration curves all have two branches after their bifurcation value in $H^2$. The lower branch is the stable relative equilibrium with an increasing separation as the angular momentum increases. The upper branch is an unstable branch with a relative distance that decreases with an increase in angular momentum and either ends by intersecting with the ER configuration, destabilizing it, or that intersects with the unstable Euler Orbital relative equilibrium (not shown). We can note a few additional points. First, we see that the energy -- angular momentum curves of the EAij-k and EAji-k configurations bifurcate at different conditions but become indistinguishable for larger angular momenta. Also, it is interesting to note that the EAij-k curves asymptotically approach the $\mathbf{HE}_k$ necessary conditions for escape as $H^2 \rightarrow \infty$. 

\begin{figure}[htb]
\centering
\includegraphics[scale=0.35]{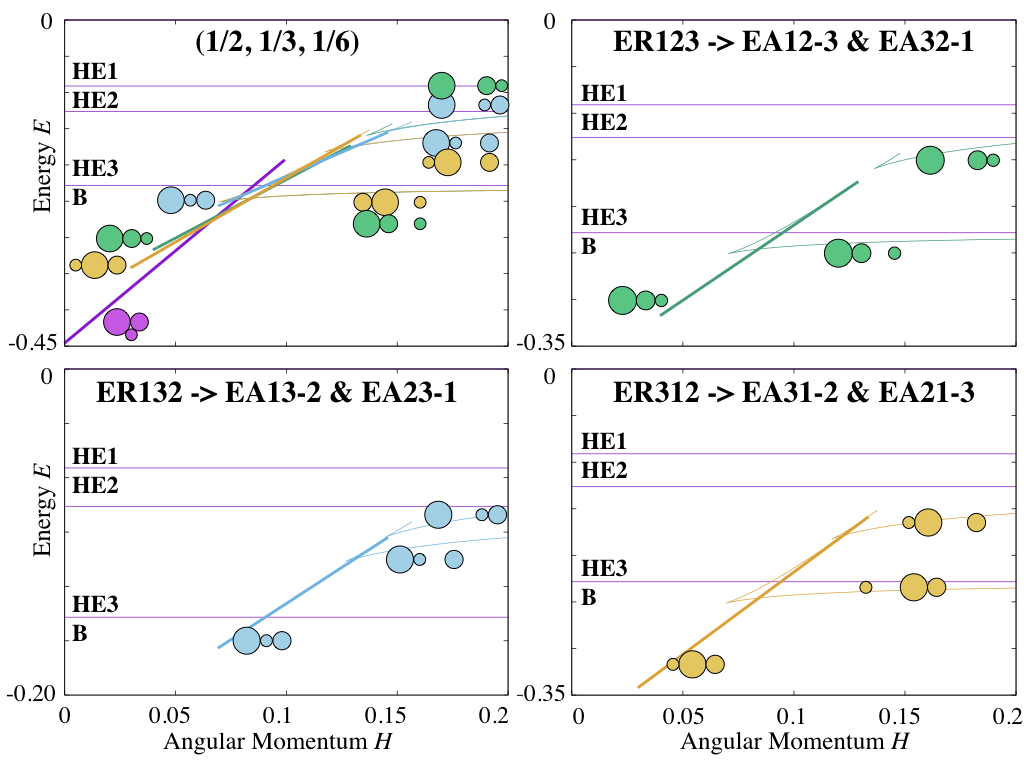}
\caption{Composite and deconstructed Energy-Angular momentum graph of a system with mass fractions 1/2, 1/3, 1/6.}
\label{fig:EH_study}
\end{figure}

In Fig.\ \ref{fig:survey} we show a survey of energy -- angular momentum charts for a number of different mass fractions chosen across the parameter space. These are to show the diversity of relations that exist between the limits on the ER configurations, the necessary conditions for escape, and when the EA configurations may exist relative to the ER termination points. We note that along the borders of the parameter space some of the distinct cases become the same as two of the bodies will then have equal values of mass, cutting down on the number of distinct curves. Points of interest include mapping the transition of the LR and ER termination points relative to each other and relative to the necessary conditions for escape. 

\begin{figure}[htb]
\centering
\includegraphics[scale=0.35]{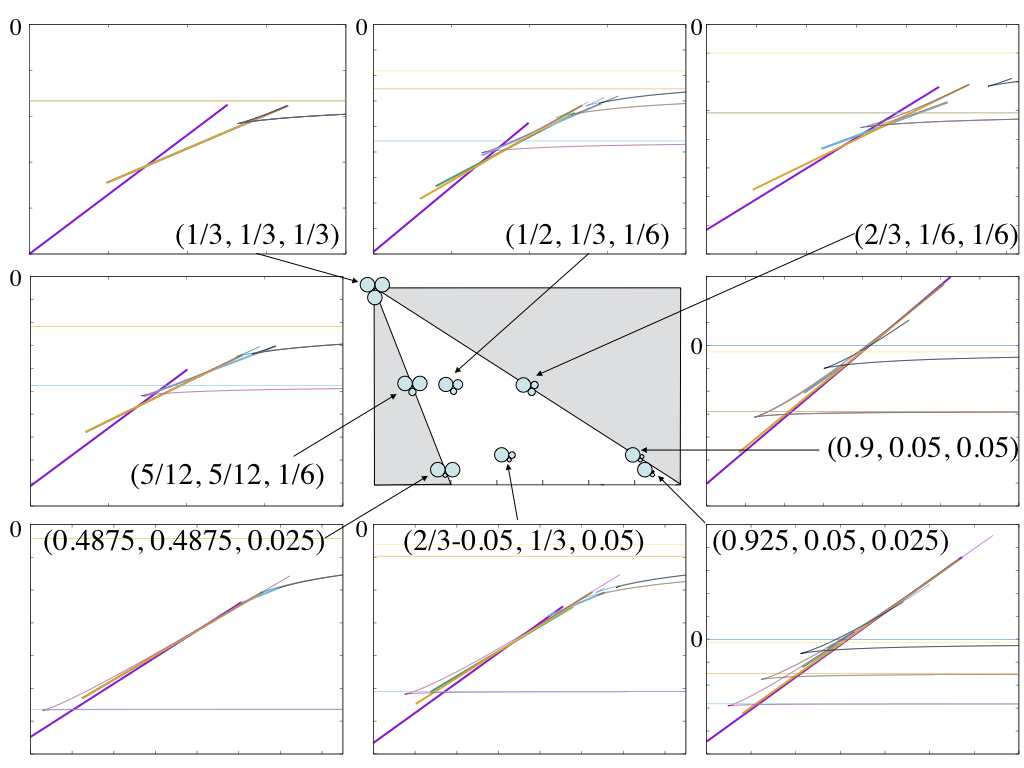}
\caption{Collection of Energy-Angular Momentum plots showing the range of behaviors across the parameter space.\textcolor{black}{On each sub-graph the zero energy level is indicated, to show the shifting energy levels across the region.}}
\label{fig:survey}
\end{figure}

\subsection{Post-Escape System State}

If the original 3-body system satisfies the necessary escape conditions, and if one of the bodies is ejected, it is possible to put constraints on what state the remaining 2-body system can lie in. Here we nominally assume that no energy has been dissipated, and instead focus on the range of possible outcomes for the disrupted system. 

As a general model, we assume that the energy of the disrupted system is described by the mutual escape kinetic energy between the bound pair and the unbound body plus the energy of the bound pair relative to each other. We assume that the bound pair lies in a minimum energy configuration for the full 2-body problem, which can either be a resting configuration or a circular, doubly synchronous orbit. This state can be generically defined as
\beq
	E_f & = & \frac{1}{2} m_k (m_i + m_j) V_{\infty}^2 + \frac{1}{2} \left[ I_{S_i} + I_{S_j} + \frac{m_i m_j}{m_i+m_j} d_{ij}^2 \right] \Omega^2 - \frac{m_i m_j}{d_{ij}}  \label{eq:Ef} \\ 
	& = & \frac{1}{2} m_k (m_i + m_j) V_{\infty}^2 + E_2
\eeq
where the kinetic energy term is divided by the total mass, $m_i + m_j + m_k = 1$, $E_2$ denotes the energy of the remaining full 2-body system, the distance $d_{ij} \ge 1-r_k$, and the term $I_{S_i}$ corresponds to the moment of inertia of body $i$. 

We recall that for the full 2-body problem, the system will fission if their mutual spin rate equals $\Omega^2 = \frac{m_i + m_j}{(r_i+r_j)^3}$. This corresponds to an energy in the full 2-body system of
\beq
	E_{2f} & = & \frac{1}{2} \frac{m_i + m_j}{(r_i+r_j)^3} \left[ I_{S_i} + I_{S_j} - \frac{m_i m_j}{m_i+m_j} (r_i+r_j)^2 \right] \label{eq:E2f}
\eeq
Also, mapping the calculation from \cite{scheeres_minE} into the current units, if the energy of the full 2-body problem exceeds the value 
\beq
	E_{2b} & = & - \frac{ (m_i m_j)^{3/2} }{3 \sqrt{3} \sqrt{ (m_i+m_j) (I_{S_i} + I_{S_j})}} \label{eq:E2b}
\eeq
an orbital relative equilibrium exists. We note that $E_{2b} < E_{2f}$. At energies less than $E_{2b}$, only a resting equilibrium orbit exists. For energies in the range $E_{2b} \le E_2 \le E_{2f}$ both a resting and an orbital configuration exist. And at energies greater than $E_{2f}$, only an orbital configuration exists. 

To find an upper limit on the escape speed, assume that $\Omega = 0$, leaving all of the bodies non-rotating and bodies $i$ and $j$ resting on each other. Solving for the escape speed gives the maximum possible
\beq
	V_M & = & \sqrt{ \frac{2}{m_k(m_i+m_j)} \left[ E_f + \frac{m_i m_j}{1-r_k} \right] }
\eeq
so that $V_{\infty} \le V_M$. 
It is clear that if the necessary condition $\mathbf{HE}_k$ is satisfied, then the term inside the square root will be positive and well defined.  

At the other extreme, assume that $V_{\infty} = 0$ and that the remainder of the energy is deposited in the bound body's orbit and rotation. In this case we see that the fission energy of the 3-body system equals the energy of the bound 2-body system, or $E_f = E_2$. To place constraints on the bound 2-body system, we can compare the fission energies to the limiting 2-body energies given in Eqns.\ \ref{eq:E2f} and \ref{eq:E2b}. Note that for each fission energy we compare the energy $E_f$ with three different values for $E_2$, corresponding to bodies 1 and 2, 1 and 3, or 2 and 3 being bound. 

Figure \ref{fig:CB} presents these comparisons for the fission energy of the LR and the three ER configurations. There is a similar pattern for all cases. First, for all configurations we find that $E_f \ge E_{2b}$ for cases with small enough masses for bodies 2 and 3. However, we only find that $E_f \ge E_{2f}$ for the case when body 1 is ejected and bodies 2 and 3 remain. For all other cases, there are both resting and orbital states possible for the bound 2-body system. For the region where $E_f < E_{2b}$ only resting configurations are possible after ejection of body $k$. We note that we have plotted the binary conditions independent of whether the corresponding $\mathbf{HE}_k$ necessary condition is satisfied. That being said, there is some correspondence between the necessary conditions and the different binary outcomes, the most clear one being that when the $\mathbf{H}$ necessary condition is satisfied the BO23 condition (see the caption of Fig.\ \ref{fig:CB} for a description of this notation) is also satisfied, meaning that if the smaller two bodies are left in a bound condition they must orbit each other. 

\begin{figure}[htb]
\centering
\includegraphics[scale=0.35]{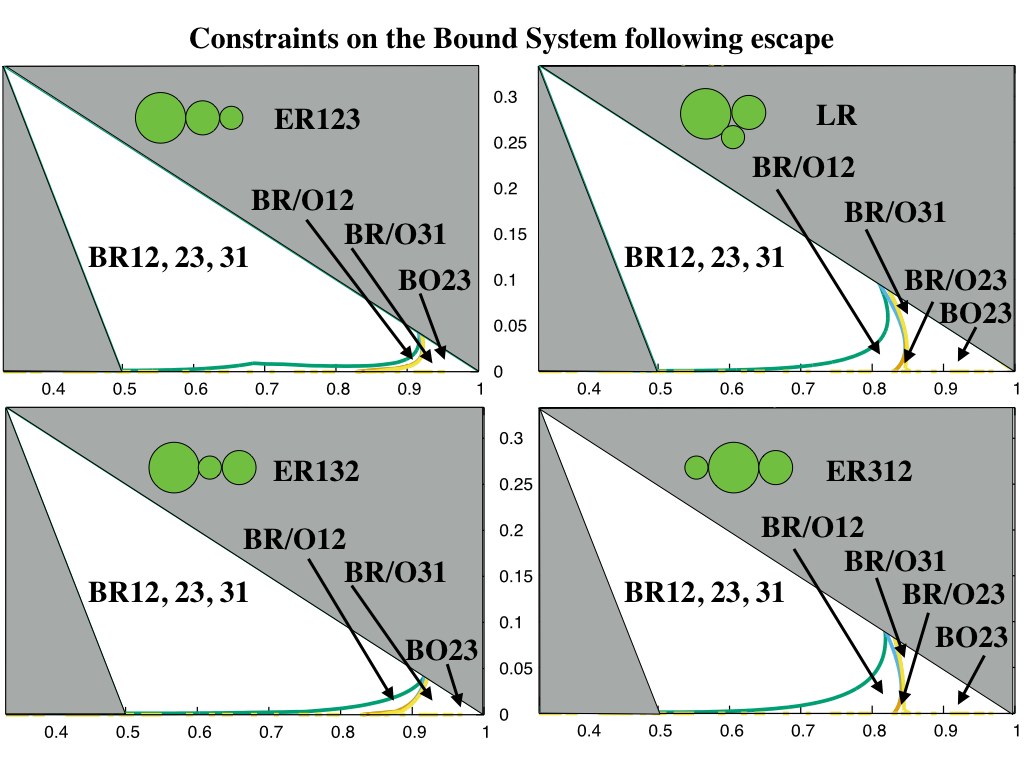}
\caption{Constraints on the bound system following escape of body $k$. BRij denotes that bodies $i$ and $j$ can only exist in a resting configuration after body $k$ has escaped, BR/Oij denotes that the bodies can either be in a resting or orbiting configuration after body $k$ has escaped, and BOij denotes that the bodies can only be in an orbiting configuration. Note that the two panels on the right hand side (LR and ER312) have a small visible region where the BR/O23 case occurs. In all of the other panels the region where both outcomes occur are too small to be seen, even though they are present. }
\label{fig:CB}
\end{figure}

\subsection{Sufficiency Condition for $\mathbf{HE}$ or $\mathbf{H}$}

There are only two sufficiency conditions identified in Theorem \ref{thm:1}, for bounded motion and for guaranteed escape. While the former occurs, the latter will never apply to systems that undergo fission, meaning that it is never guaranteed that a fissioned system that satisfies some of the necessary conditions for escape will also satisfy the sufficient condition for escape. To establish this, we note that the proof rests on the quantity $E - T_r = T_o + \mathcal{U} > 0$. This can be explicitly checked using the LR and ER fission spin rates from Eqns.\ \ref{eq:LRFSR} and \ref{eq:ERFSR} for the corresponding configurations. Computing this quantity across the parameter space explicitly shows that this quantity is always positive. Thus, when any LR or ER configuration fissions it is never guaranteed that one of the bodies will escape, this is so even though it is possible for one of these bodies to escape. 

\section{Summary and Discussion}

In this paper we explore the energetics of the spherical, full 3-body problem with a specific focus on whether bodies that undergo fission are able to dynamically eject components. These energetics were found to be important in developing a basic understanding of the formation conditions for asteroid pairs, and this study shows that accounting for additional bodies in the system can eject a significantly larger fraction of the initial mass of the system, with the mass fraction increasing up to 0.31 as compared to a 0.17 limit in the binary case. However, the parameter space that leads to such a larger ejected mass is somewhat limited. This limit can be explicitly compared with figure 6 from \cite{AIV_interiors} (note that a mass fraction of 0.31 corresponds to a mass ratio between escaping mass and remaining mass of 0.45), and thus such 3rd body effects could account for those asteroid pairs that violate the 2-body limits previously identified. 

It is important to note that the current paper only focuses on the energetics of fission and escape, and does not investigate the detailed dynamics of these fissioned systems. Such studies are necessary, and in general will need to include the effects of non-spherical shapes and surface forces whenever two bodies collide. These studies are necessary to evaluate the sharpness of the conditions found herein, similar to what was previously done for the full 2-body problem in \cite{scheeres_F2BP_planar, jacobson_icarus}. 

\section*{Acknowledgements}

The author acknowledges support from NASA grant NNX14AL16G from the Near Earth Objects Observation programs. 

\bibliographystyle{plain}
\bibliography{../../../bibliographies/biblio_article,../../../bibliographies/biblio_books,../../../bibliographies/biblio_misc}

\end{document}